\documentclass[a4paper,USenglish,cleveref, autoref, thm-restate]{compgeomtop}

\usepackage[utf8]{inputenc}
\usepackage[T1]{fontenc}

\usepackage[linesnumbered,ruled,onelanguage,vlined]{algorithm2e}

\usepackage{tikz}

\interfootnotelinepenalty=10000

\usetikzlibrary{calc}
\let\le\leqslant
\let\ge\geqslant

\thickmuskip=5mu plus 1mu minus 2mu

\bibliographystyle{plainurl}

\title{A note on the flip distance between non-crossing spanning trees}

\author{Nicolas Bousquet}{Univ. Lyon, Université Lyon 1, CNRS, LIRIS UMR CNRS 5205, F-69621, Lyon, France}{nicolas.bousquet@cnrs.fr}{https://orcid.org/0000-0003-0170-0503}{} 

\author{Valentin Gledel}{Department of Mathematics and Mathematical Statistics, Ume\r{a} University, Sweden}{valentin.gledel@umu.se}{https://orcid.org/0000-0003-4736-4656}{}

\author{Jonathan Narboni}{Theoretical Computer Science Department, Faculty of Mathematics and Computer Science,
Jagiellonian University, Kraków, Poland}{jonathan.narboni@uj.edu.pl}{https://orcid.org/0000-0002-3087-5073}{National Science Center of Poland grant 2019/34/E/ST6/00443}
\author{Théo Pierron}{Univ. Lyon, Université Lyon 1, CNRS, LIRIS UMR CNRS 5205, F-69621, Lyon, France}{theo.pierron@univ-lyon1.fr}{https://orcid.org/0000-0002-5586-5613}{}

\authorrunning{N. Bousquet, V. Gledel, J. Narboni and T. Pierron} 

\Copyright{Nicolas Bousquet, Valentin Gledel, Jonathan Narboni and Théo Pierron} 

\keywords{spanning tree, flip distance, reconfiguration}

\relatedversion{}

\funding{ANR project GrR (ANR-18-CE40-0032)}

\Volume{42}
\ArticleNo{23}

\begin{document}

\maketitle

\begin{abstract}
We consider spanning trees of $n$ points in convex position whose edges are pairwise non-crossing. Applying a flip to such a tree consists in adding an edge and removing another so that the result is still a non-crossing spanning tree. Given two trees, we investigate the minimum number of flips required to transform one into the other. The naive $2n-\Omega(1)$ upper bound stood for 25 years until a recent breakthrough from Aichholzer et al. yielding a $2n-\Omega(\log n)$ bound. We improve their result with a $2n-\Omega(\sqrt{n})$ upper bound, and we strengthen and shorten the proofs of several of their results.
\end{abstract}

\section{Introduction}

We fix a set  $P=\{v_1,\ldots,v_n\}$ of $n$ points in the plane in convex position (and we assume that $v_1,\ldots,v_n$ appear in this order on the convex hull of $P$). A non-crossing spanning tree is a spanning tree of $P$ whose edges are straight line segments between pairs of points such that no two
edges intersect (except on their endpoints). A \emph{flip} removes an edge of a non-crossing spanning tree and adds another one so that the result is again a non-crossing spanning tree of $P$. 
A \emph{flip sequence} is a sequence of non-crossing spanning trees such that consecutive spanning trees in the sequence differ by exactly one flip. 
We study the problem of transforming a non-crossing spanning tree into another via a sequence of flips. 

Given two non-crossing spanning trees $T_1$ and $T_2$, observe that the size $|T_1\Delta T_2|$ of the symmetric difference between their sets of edges may decrease by at most $2$ when applying a flip, hence $|T_1\Delta T_2|/2$ flips are required (note that this quantity can be as large as $n$ when $T_1$ and $T_2$ have no common edge). Hernando et al.~\cite{HernandoHMMN99} proved that there exists two trees $T_1$ and $T_2$ such that any flip sequence needs at least $\frac 32 n-5$ flips. Regarding upper bounds, Avis and Fukuda~\cite{AvisF96} proved that there always exists a flip sequence between $T_1$ and $T_2$ using at most $2n-4$ flips. This simple $2n-\Omega(1)$ upper bound was not improved in the last $25$ years until a recent work of Aichholzer et al.~\cite{Aichholzer22+} who improved the upper bound into $2n-\Omega(\log n)$. They also proved that there exists a flip sequence of length $\frac 32n$ if one of the two spanning trees is a path. 

In this paper, we improve the upper bound of~\cite{Aichholzer22+} by proving that there exists a transformation of length at most $2n-\Omega(\sqrt{n})$ (Corollary~\ref{coro1}). We also reprove with a shorter proof the existence of a transformation of length $\frac 32n$ when one tree is a path (Theorem~\ref{thm:onepath}) and relax this result by showing that if one of the trees contains an induced path of length $t$ then there exists a transformation of length $2n-\frac t3$ (Corollary~\ref{coro:longpath}).

Finally, we prove that if one of the trees is a nice caterpillar (whose precise definition will be given in Section~\ref{sec:result}), the shortest transformation has length at most $\frac 32 n$ (Corollary~\ref{coro:worst}). This is remarkable since, as far as we know, in all the examples where $\frac 32 n - \Omega(1)$ flips are needed, at least one of the two non-crossing spanning trees is a nice caterpillar~\cite{HernandoHMMN99,Aichholzer22+}. So our statement essentially ensures that, if $\frac 32n$ is not the tight upper bound, then the spanning trees between which a larger transformation is needed should be constructed quite differently. 

We think that all our partial results give additional credit to the following conjecture:

\begin{conjecture}
 There is a flip sequence between any pair of non-crossing spanning trees of length at most $\frac 32 n$. 
\end{conjecture}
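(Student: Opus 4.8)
The plan is to prove the conjecture by a \emph{direct} transformation rather than by routing both trees through a common canonical tree. The latter strategy underlies the $2n-O(1)$ bound of Avis--Fukuda, and even when the intermediate tree is chosen adaptively it seems unable to beat $2n-o(n)$ --- exactly the regime our $2n-\Omega(\sqrt n)$ result (Corollary~\ref{coro1}) lives in. Since the lower bound of Hernando et al.\ is $\frac32 n-5$, the target $\frac32 n$ should be essentially tight, so whatever argument one builds must be lossless up to an additive constant: there is no slack to waste.

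First I would set up a charging framework on the symmetric difference. Writing $c=|T_1\cap T_2|$, each tree has $n-1$ edges, so $|T_1\Delta T_2|=2(n-1-c)$. Classify the flips of a transformation as \emph{progress} flips, which insert an edge of $T_2$ not yet present while deleting a current edge not in $T_2$ (strictly increasing the overlap with $T_2$), and \emph{setup} flips, which do anything else. If the transformation is monotone, never destroying an edge of $T_2$ once created, it needs exactly $n-1-c$ progress flips. Writing $S$ for the number of setup flips, the total length is then $(n-1-c)+S$, so the conjecture follows as soon as one can always realize the transformation with $S\le \frac{n}{2}$ setup flips, since then the length is at most $(n-1)+\frac{n}{2}<\frac32 n$. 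The entire content of the conjecture thus reduces to a clean combinatorial target: \emph{at most one setup flip per two vertices}.

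The natural engine for progress flips is tree exchange: inserting a target edge $e\in T_2$ creates a unique cycle, which must contain a current edge outside $T_2$ (as $T_2$ is acyclic), and deleting that edge increases the overlap with $T_2$. The obstruction --- and the reason a purely greedy insertion fails --- is geometric: $e$ can be inserted only if it crosses no edge of the current tree. I would process the edges of $T_2$ from innermost to outermost (by the number of hull vertices they span) and exploit the edges of $T_1\cap T_2$, together with hull edges (which cross nothing and can be introduced at unit cost), as diagonals that split the convex polygon and localize the non-crossing constraint. This suggests a divide-and-conquer recursion along a shared splitting edge, with the proven cases --- one tree a path (Theorem~\ref{thm:onepath}) or a nice caterpillar (Corollary~\ref{coro:worst}) --- serving as base cases in which a crossing-free innermost target edge is always available.

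The main obstacle is precisely the case $T_1\cap T_2=\emptyset$, where no splitting diagonal comes for free and the symmetric difference is maximal: here one must first manufacture a crossing-free insertion by removing a blocking edge, and the danger is that each such setup flip pays for only one unit of overlap rather than two, pushing the bound back toward $2n$. Controlling these auxiliary flips \emph{globally}, amortizing them against the progress flips rather than bounding them one at a time, is the crux. It is exactly the step that the path and caterpillar arguments handle only because their rigid structure guarantees a steady supply of innermost target edges that can always be inserted. Extending that guarantee to an arbitrary target tree --- or, better, replacing it by a single potential function that provably decreases under a well-chosen flip at every configuration --- is where I expect the real difficulty to lie, and is likely why the conjecture remains open.
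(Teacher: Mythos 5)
The statement you were asked to prove is not a theorem of the paper: it is stated there as an open \emph{conjecture}, and the paper offers no proof of it --- only partial evidence (the $2n-\Omega(\sqrt n)$ bound of Corollary~\ref{coro1}, the path case of Theorem~\ref{thm:onepath}, the nice-caterpillar case of Corollary~\ref{coro:worst}, and Corollary~\ref{coro:longpath}). Your proposal, to its credit, correctly recognizes this: its final sentence concedes that the key step ``is likely why the conjecture remains open.'' So there is nothing to compare against on the paper's side; what remains is to assess whether your outline closes the gap, and it does not.

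Concretely, the gap is that your central reduction --- the conjecture holds if every pair of trees admits a monotone transformation with at most $\frac n2$ setup flips --- is not a reduction that gains anything: it is essentially a restatement of the conjecture, with all of the difficulty transferred into constructing such a transformation, and no construction, invariant, or potential function is provided. The accounting itself is fine (monotonicity forces exactly $n-1-c$ progress flips, so length $(n-1-c)+S\le \frac32 n$ follows from $S\le \frac n2$), but the existence of the required flip sequence is exactly what must be proved, and the hardest instances are precisely those you flag and then leave open: $T_1\cap T_2=\emptyset$, where no shared diagonal localizes the non-crossing constraint and your divide-and-conquer has no splitting edge to recurse on. Note also that the known worst-case lower-bound examples ($\frac32 n - O(1)$ flips needed, from Hernando et al.) live in this disjoint regime, so the case your scheme cannot handle is the one where the bound is tight --- there is no hope of dismissing it as degenerate. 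Finally, a smaller inaccuracy: the proven cases you propose as base cases are not established in the paper by an ``innermost target edge'' insertion argument; they follow from a lemma about adding convex-hull (border) edges and then invoking a one-sided structure (Lemma~\ref{lem:main}), a mechanism that does not obviously generalize to interior edges of an arbitrary $T_2$. In short: the framework is a reasonable research plan, but the statement remains unproved, both in your proposal and in the paper.
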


All our proofs are simple, self-contained, and mainly follow from simple applications of a lemma stated at the beginning of the next section.

\section{Results}\label{sec:result}

Recall that all along the paper we consider a set of $n$ points $v_1,\ldots,v_n$ in convex position appearing in that order. 
A \emph{leaf} of a tree $T$ is a vertex of degree one. An \emph{internal node} of $T$ is a vertex that is not a leaf. A \emph{border edge} is an edge of the convex hull, \emph{i.e.} $v_iv_{i+1}$ for some $i$.
Let us first prove the following claim:

\begin{claim}\label{clm:simple}
Let $T$ be a non-crossing spanning tree and $e$ be a border edge. Then we can add $e$ in $T$ with an edge-flip without removing any border edge (except if $T$ only contains border edges).
\end{claim}
\begin{proof}
Adding $e$ to $T$ does not create any crossing, since $e$ belongs to the convex hull of $P$. Moreover, the unique cycle in $T\cup\{e\}$ must contain at least an edge $e'$ that does not belong to set of border edges, since otherwise $T\cup\{e\}$ is precisely the convex hull of $P$. 
\end{proof}

All our results follow from the following simple but very useful lemma:

\begin{restatable}[]{lemma}{main}\label{lem:main}
Let $i \le n$. Let $T_1,T_2$ be two spanning trees of $P$ such that $T_1$ contains all the edges $v_jv_{j+1}$ for $j <i$ and $T_2$ has no edge $v_jv_k$ with $j> i$ and $k>i$. Then there exists a flip sequence between $T_1$ and $T_2$ of length at most $|T_1\Delta T_2|/2$.
\end{restatable}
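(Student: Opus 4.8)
The plan is to realize the bound by a flip sequence in which every flip is \emph{productive}: it deletes an edge of $A := T_1\setminus T_2$ and simultaneously inserts an edge of $B := T_2\setminus T_1$. Since a single flip changes $|T\Delta T_2|$ by at most $2$, and a productive flip decreases it by exactly $2$, a sequence consisting solely of productive flips has length $|T_1\Delta T_2|/2$ and is in fact optimal. So it suffices to show that from any non-crossing spanning tree $T\neq T_2$ reachable from $T_1$ by productive flips, one more productive flip is available; the lemma then follows by induction on $|T_1\Delta T_2|$.

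I would first dispose of the deletion side, which is free. If $e\in T_2\setminus T$ can be \emph{added} to $T$ without creating a crossing, then $T\cup\{e\}$ has a unique cycle $C$, and $C$ must contain an edge outside $T_2$ (otherwise $T_2\supseteq C$ would contain a cycle); any such edge lies in $T\setminus T_2$, which is contained in $A$ because $T$ was obtained from $T_1$ by productive flips. Removing it yields a productive flip. Hence the whole problem reduces to the geometric statement: \emph{as long as $T\neq T_2$, some edge of $T_2\setminus T$ crosses no edge of $T$}. Note that an edge of $T_2\setminus T$ can only be crossed by an edge of $T\setminus T_2$, since two edges of $T_2$ never cross.

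To produce such an edge I would combine an extremal choice with the non-crossingness of $T_2$. Consider a chord $e=v_av_b$ of $T_2\setminus T$ minimizing the span $b-a$. Because no edge of $T_2$ crosses $e$, every $T_2$-path from a vertex strictly inside the arc $(a,b)$ to a vertex outside it must pass through $v_a$ or $v_b$; so if a current edge $v_cv_d$ crosses $e$ (say $a<c<b<d$), then following the $T_2$-path from $v_c$ toward the outside until it first reaches $v_a$ or $v_b$ produces $T_2$-edges all of span strictly less than $b-a$. If one of these edges is missing from $T$, it is an edge of $T_2\setminus T$ of smaller span, contradicting minimality. This is exactly where the hypotheses must intervene, since in general all these edges could already be present in $T$. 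Here the left path $v_1\cdots v_{i-1}v_i\subseteq T_1$ (maintained as an invariant, deleting backbone edges of $A$ only when forced and from the ends, so that $\{v_1,\dots,v_i\}$ stays connected and crossing-free) together with the fact that \emph{every} edge of $T_2$ meets $\{v_1,\dots,v_i\}$ confines the remnant edges of $A$ and the attachment points of $B$, which is what should forbid a locked configuration and force a genuinely new short edge to exist.

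The main obstacle is precisely this last existence step. It cannot hold without the hypotheses: for arbitrary $T_1,T_2$ the sought crossing-free productive edge need not exist (configurations can be \emph{locked}, every missing target edge being crossed by a present edge), which is the very phenomenon behind the general $\frac{3}{2}n-\Omega(1)$ lower bounds. So the delicate, hypothesis-using heart of the argument is to show that the backbone $v_1\cdots v_i$ and the condition that all $T_2$-edges touch $\{v_1,\dots,v_i\}$ rule out locking and guarantee an addable edge at every step; a secondary technical point is to schedule the unavoidable deletions of backbone edges of $A$ so that the invariant survives. Once existence is established, the induction closes immediately.
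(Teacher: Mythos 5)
Your general framework is sound and in fact matches the accounting in the paper's proof: count only \emph{productive} flips (delete an edge of $T\setminus T_2$, insert an edge of $T_2\setminus T$), note that if some $e\in T_2\setminus T$ can be added without crossings then the unique cycle of $T\cup\{e\}$ contains an edge outside $T_2$ whose removal makes the flip productive, and induct. But the proposal stops exactly where the lemma begins. The existence of an addable edge of $T_2\setminus T$ at every step is never established: your minimal-span argument only shows that if the minimal edge $v_av_b$ of $T_2\setminus T$ is crossed by an edge of $T$, then the $T_2$-path from the blocking endpoint to $\{v_a,v_b\}$ lies entirely in $T$ --- which is not a contradiction, as you yourself concede (``in general all these edges could already be present'', ``the main obstacle is precisely this last existence step''). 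Likewise, the ``secondary technical point'' is a real unresolved problem in your setup: backbone edges $v_jv_{j+1}\notin T_2$ belong to $T_1\setminus T_2$ and must eventually be deleted, and nothing in the sketch guarantees a productive flip remains available once the path $v_1\cdots v_i$ is broken. What you have is a correct reduction plus an acknowledged conjecture, not a proof.

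The two ideas you are missing (and which the paper uses) are the following. First, do not keep $T_2$ fixed: flip sequences are reversible, so one may flip $T_2$ as well. Each missing border edge $v_jv_{j+1}$, $j<i$, can be added to $T_2$ without creating crossings, and the cycle it closes must contain an edge of $T_2\setminus T_1$ (otherwise $T_1$, containing $v_jv_{j+1}$ and that whole $T_2$-path, would contain a cycle); removing such an edge makes these flips productive too. After this phase the backbone $E_X=\{v_jv_{j+1}\colon j<i\}$ is common to both trees, so productive flips never touch it --- your scheduling problem simply disappears --- and every vertex outside $X=\{v_1,\dots,v_i\}$ becomes a leaf of $T_2$, since two neighbors in $X$ would close a cycle through $E_X$. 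Second, with this structure a good flip is found by a terminating chase rather than by a single extremal choice: take the edge $v_{i_1}v_{j_1}$ of $T_1\setminus T_2$ with $i_1\le i<j_1$, $i_1$ minimal and $j_1$ maximal, and try to exchange it with the pendant $T_2$-edge $v_jv_{j_1}$ at the leaf $v_{j_1}$; the added edge cannot be crossed by edges of $T_2$, so if this flip fails, some edge of $T_1$ crosses $v_jv_{j_1}$ and one repeats the argument with that edge; the successive outer indices $j_1>j_2>\cdots$ strictly decrease while remaining larger than $i$ (an edge with both endpoints in $X$ would close a cycle with $E_X\subseteq T_1$), so after at most $n-i$ steps a good flip exists. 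These two steps --- symmetrizing the backbone into $T_2$, then the decreasing-index chase --- are precisely the ``hypothesis-using heart'' that your proposal leaves open.
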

 
\begin{proof}
Let us denote by $X$ the subset of points $\{v_1,\ldots,v_i\}$ and $E_X$ the set of edges $v_jv_{j+1}$ for every $j <i$. We first apply Claim~\ref{clm:simple} to $T_2$ until it contains all the edges of $E_X$. Since $T_1$ contains all edges in $E_X$, we can choose these flips in such a way the symmetric difference $T_1\Delta T_2$ decreases by $2$ at each time. Note that, afterwards, all the vertices outside of $X$ are leaves in $T_2$. 

We now prove that, as long as $T_1\neq T_2$, one can find a \emph{good} flip in $T_1$, \emph{i.e.} such that after applying it to $T_1$, the resulting tree $T'_1$ still satisfies the hypothesis of the lemma, and $|T'_1\Delta T_2|=|T_1\Delta T_2|-2$. The conclusion immediately follows by iterating this argument on $T'_1$ until we reach $T_2$. 

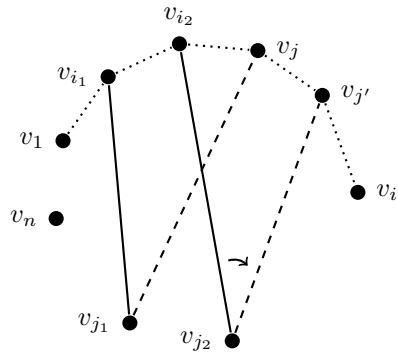
\begin{figure}[!ht]
\centering
\begin{tikzpicture}[thick, v/.style={fill, circle,inner sep=2pt}]
    \node[v,label=left:{$v_1$}] (v1) at (160:2) {};
    \node[v,label=left:{$v_{i_1}$}] (v2) at (130:2) {};
    \node[v,label=above:{$v_{i_2}$}] (v3) at (100:2) {};
    \node[v,label=right:{$v_j$}] (v4) at (70:2) {};
    \node[v,label=right:{$v_{j'}$}] (v5) at (40:2) {};
    \node[v,label=right:{$v_i$}] (v6) at (0:2) {};
    \node[v,label=left:{$v_{j_2}$}] (v7) at (280:2) {};
    \node[v,label=left:{$v_{j_1}$}] (v8) at (240:2) {};
    \node[v,label=left:{$v_n$}] (v9) at (190:2) {};
    \draw[dotted] (v1) -- (v2) -- (v3) -- (v4) -- (v5) -- (v6);
    \draw (v2) -- (v8);
    \draw (v3) to node[midway,near end] (A) {}  (v7);
    \draw[dashed] (v8) -- (v4);
    \draw[dashed] (v7) to node[midway,near start] (B) {} (v5);
    \draw[bend left, ->] (A) to (B);
\end{tikzpicture}
\caption{Dotted edges denote the path $E_X$, edges of $T_2$ are dashed and edges of $T_1$ are full. The flip $v_{i_1}v_{j_1}\to v_{j_1}v_j$ is not good, but $v_{i_2}v_{j_2}\to v_{j_2}v_{j'}$ is.}
\label{fig:lem2}
\end{figure}
Assume that $T_1\neq T_2$, and let $v_{i_1}v_{j_1}$ be the leftmost edge of $T_1\setminus T_2$ not in $E_X$, \emph{i.e.} such that $i_1 \in[1,i]$ is minimum and $j_1>i$ is maximum (see Figure~\ref{fig:lem2}). Such an edge must exist since $T_1\neq T_2$. Recall that $v_{j_1}$ is a leaf in $T_2$ and its parent $v_j$ must lie in $X \setminus\{v_{i_1}\}$. If exchanging $v_{i_1}v_{j_1}$ with $v_{j_1}v_{j}$ in $T_1$ is a good flip, then we are done. So we may assume that there must be an edge from $T_1$ that crosses $v_{j_1}v_j$, say $v_{i_2}v_{j_2}$. By minimality of $i_1$ and since $T_1$ is a non-crossing tree, we thus have $j>i_1$, $j_2<j_1$, and hence $i_2\geqslant i_1$. Iterating the previous argument with $v_{i_2}v_{j_2}$ instead of $v_{i_1}v_{j_1}$ yields either a good flip, or a sequence of vertices $v_{j_1},\ldots,v_{j_k}$ with $j_1>\cdots >j_k$. Note that $j_k > i$ since otherwise $v_{i_k}v_{j_k}\cup E_X$ induces a cycle in $T_2$. Therefore, this process must terminate after at most $n-i$ steps, so there exists a good flip, which concludes.
\end{proof}

In the rest of the paper, we derive corollaries from that lemma. 
First, we improve the upper bound of~\cite{Aichholzer22+}:

\begin{corollary}\label{coro1}
There exists a flip sequence of length at most $2n - \Omega(\sqrt n)$ between any pair of non-crossing spanning trees.
\end{corollary}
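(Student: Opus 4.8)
The plan is to route the transformation through Lemma~\ref{lem:main}, spending a controlled number of preparatory flips to force its two hypotheses and then invoking the optimal bound it supplies. Fix a contiguous arc $X$ of the hull and let $Y$ be the complementary arc; after relabelling we may assume $X=\{v_1,\dots,v_i\}$ and $Y=\{v_{i+1},\dots,v_n\}$. First I would turn $T_1$ into a tree $T_1'$ containing the path $E_X=\{v_jv_{j+1}:j<i\}$: by Claim~\ref{clm:simple} every missing edge of $E_X$ can be inserted by one flip deleting a non-border edge, so this costs at most $(i-1)-\rho_1$ flips, where $\rho_1$ is the number of edges of $E_X$ already in $T_1$. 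Symmetrically I would turn $T_2$ into a tree $T_2'$ with no edge inside $Y$, deleting the $e_{T_2}(Y)$ edges of $T_2$ lying inside $Y$ one at a time; each deletion should reconnect the two pieces through an edge incident to $X$, a routine variant of Claim~\ref{clm:simple}. As $T_2[Y]$ is a forest, $e_{T_2}(Y)=(n-i)-c_{T_2}(Y)$, where $c_{T_2}(Y)$ is its number of components. Now $T_1',T_2'$ satisfy the hypotheses of Lemma~\ref{lem:main}, which gives a flip sequence of length at most $|T_1'\Delta T_2'|/2\le n-1$. Summing the three phases, and noting that the parameter $i$ cancels, the total length is at most $2n-2-\rho_1-c_{T_2}(Y)$.

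The two extreme choices of $X$ are already instructive. Taking $X$ to be the whole hull ($i=n$, $Y=\emptyset$) makes $T_1'$ the Hamiltonian hull path; rotating so that the one omitted hull edge is absent from $T_1$, this costs $n-1-b(T_1)$ flips, where $b(T_1)$ is the number of border edges of $T_1$, giving total at most $2n-2-b(T_1)$ with no deletion phase at all. At the other extreme, taking $Y$ to be a maximal run of consecutive leaves of $T_2$ empties the deletion phase (two leaves are never adjacent) and gives total at most $2n-2-|Y|$. In general the bound shows we save $\max\!\big(b(T_1),\max_Y c_{T_1}(Y)\big)$ flips over the trivial $2n$ estimate, and the symmetric quantity for $T_2$, the two trees being freely assignable to the ``left-path'' and ``no-edge-in-$Y$'' roles. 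Hence it suffices to prove that at least one of $T_1,T_2$ realises a saving of $\Omega(\sqrt n)$.

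The heart of the matter — and the step I expect to be the main obstacle — is the purely combinatorial claim that \emph{every} non-crossing spanning tree $T$ has either $\Omega(\sqrt n)$ border edges or a hull arc $Y$ with $c_T(Y)=\Omega(\sqrt n)$; applying this to $T_1$ and selecting the matching role then yields a flip sequence of length $2n-\Omega(\sqrt n)$. To prove the claim I would use the dichotomy it suggests: if $b(T)<\sqrt n$ then more than $n-\sqrt n$ hull edges are absent, forcing $T$ to use many chords, and the non-crossing geometry should then localise an arc that the tree covers so sparsely that it splits into $\Omega(\sqrt n)$ components. Making this quantitative is precisely where convexity must be used: a naive average over arcs only yields a constant saving, since the good arc is a \emph{large} arc (the complement of a small, degree-heavy one) rather than a typical one. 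A convenient handle is the identity $c_T(X^{c})=1+\sum_{v\in X}(\deg_T v-1)-e_T(X)$, which reduces the claim to finding a short arc $X$ concentrating enough excess degree while spanning few internal edges; the regimes to balance are degrees clustered in an arc (giving a component-rich complement directly) versus degrees spread out, forcing a path-like $T$ that either hugs the hull (many border edges) or zig-zags (many components, visible as the runs of a hull arc in the path order). Alongside this, the one analytic point to pin down is the feasibility of the deletion phase, i.e.\ that the $e_{T_2}(Y)$ inner edges can indeed be removed one per flip while staying a non-crossing spanning tree.
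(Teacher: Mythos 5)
Your reduction is sound, and it is in fact the same reduction the paper uses: prepare one tree to contain a hull path $E_X$, prepare the other to have no edge inside the complementary arc $Y$, and finish with Lemma~\ref{lem:main}; your two ``extreme'' cases (saving $b(T_1)$ flips via border edges, or saving via an arc that the tree covers sparsely) are exactly the two cases of the paper's proof. But the proposal is not a proof: the step you yourself flag as ``the main obstacle'' --- that every non-crossing spanning tree either has $\Omega(\sqrt n)$ border edges or admits a hull arc $Y$ inducing $\Omega(\sqrt n)$ components --- is left as a sketch of regimes to balance, and that claim \emph{is} the entire content of the corollary. Your general formulation also drags in a second unproven step, the ``deletion phase'' (removing the $e_{T_2}(Y)$ edges inside $Y$ one flip at a time while staying a non-crossing spanning tree), on which your bound $2n-2-\rho_1-c_{T_2}(Y)$ silently relies.

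Both gaps close at once with a pigeonhole argument you are missing. Partition the hull into $\sqrt n$ arcs of $\sqrt n$ consecutive points each. If some arc contains no edge of $T_1$ with both endpoints in it, take that arc as $Y$: then $e_{T_1}(Y)=0$, so there is no deletion phase at all --- $T_1$ already plays the ``no edge in $Y$'' role, and Claim~\ref{clm:simple} gives $T_2$ the path role in at most $n-\sqrt n$ flips before Lemma~\ref{lem:main} finishes in at most $n$ more. Otherwise, every arc contains an edge of $T_1$ with both endpoints inside it; choosing such an edge $v_av_b$ with $b-a$ minimum within a given arc, any vertex strictly between $v_a$ and $v_b$ can only attach to vertices of $\{v_a,\dots,v_b\}$ (otherwise its edge would cross $v_av_b$), so minimality forces $b=a+1$. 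Hence each arc contains a \emph{border} edge of $T_1$, giving $b(T_1)\ge\sqrt n$, and your first extreme case ($X$ the whole hull, rotated so the absent hull edge is the omitted one) finishes. So the dichotomy you wanted is a three-line argument once you chop the hull into $\sqrt n$ blocks; note also that the useful form of the ``sparse arc'' alternative is the stronger conclusion $e_{T_1}(Y)=0$ (every vertex of $Y$ is a singleton component), not merely $c_{T_1}(Y)=\Omega(\sqrt n)$, precisely because it makes the deletion phase vacuous.
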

\begin{proof}
Let $T_1,T_2$ be two non-crossing spanning trees. Partition arbitrarily the set $P$ into $\sqrt{n}$ sections of size $\sqrt{n}$. If one section does not contain any edge of $T_1$ with both endpoints in it, then we use Claim~\ref{clm:simple} to add to $T_2$ all the border edges outside of this section (in $n-\sqrt{n}$ flips), and then apply Lemma~\ref{lem:main} to transform the resulting tree into $T_1$ with $n$ additional flips.

Therefore, we can assume that all the sections contain an edge with both endpoints in $T_1$ and thus a border edge. Applying again $n-\sqrt{n}$ times Claim~\ref{clm:simple} to $T_1$ and $n$ times to $T_2$, we can transform both trees into a tree only containing border edges. This yields a flip sequence between $T_1$ and $T_2$ in at most $2n-\sqrt{n}+1$ steps (since any two trees only containing border edges are adjacent via a single edge-flip).
\end{proof}

A \emph{caterpillar} is a tree such that the set of internal nodes induces a path. A caterpillar is \emph{nice} if for every four path on four vertices $u,v,w,x$ such that $v,w$ are internal nodes, 
the line $vw$ splits the convex hull of $P$ in two parts, one of them containing $u$ and the other $x$ (see Figure~\ref{fig:caterpillar}). 

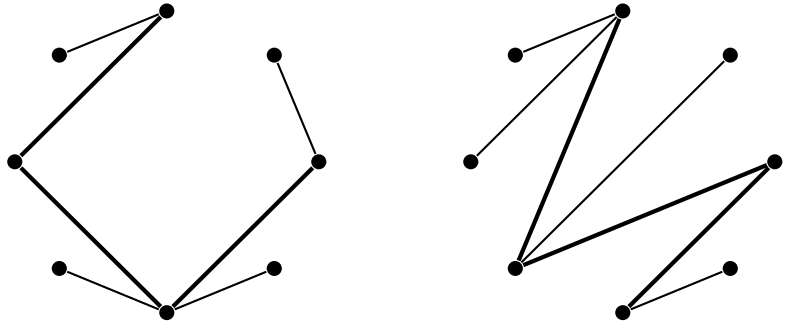
\begin{figure}[!ht]
\centering
\begin{tikzpicture}[thick, every node/.style={fill, circle,inner sep=2pt}]
\node (A) at (90:2) {};
\node (B) at (135:2) {};
\node (C) at (180:2) {};
\node (D) at (225:2) {};
\node (E) at (270:2) {};
\node (F) at (315:2) {};
\node (G) at (0:2) {};
\node (H) at (45:2) {};
\draw (B) -- (A); 
\draw (D) -- (E) -- (F);
\draw (G) -- (H);
\draw[ultra thick] (A) -- (C) -- (E) -- (G);
\tikzset{xshift=6cm}
\node (A) at (90:2) {};
\node (B) at (135:2) {};
\node (C) at (180:2) {};
\node (D) at (225:2) {};
\node (E) at (270:2) {};
\node (F) at (315:2) {};
\node (G) at (0:2) {};
\node (H) at (45:2) {};
\draw (B) -- (A) -- (C); 
\draw (D) -- (H);
\draw (E) -- (F);
\draw[ultra thick] (A) -- (D) -- (G) -- (E);
\end{tikzpicture}
\caption{Two caterpillars, whose internal path is highlighted in bold. The left one is not nice, while the right one is.}
\label{fig:caterpillar}
\end{figure}

As far as we know, in all the examples where $\frac 32 n - \Omega(1)$ flips are needed, at least one of the two non-crossing spanning trees is a nice caterpillar~\cite{HernandoHMMN99,Aichholzer22+}. 

\begin{corollary}\label{coro:worst}
Let $T_1,T_2$ be non-crossing spanning trees such that $T_1$ is a nice caterpillar. There exists a flip sequence between $T_1$ and $T_2$ of length at most $\frac 32 n$.
\end{corollary}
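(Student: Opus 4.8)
The plan is to derive this corollary from Lemma~\ref{lem:main} exactly in the spirit of the earlier corollaries, using the caterpillar structure to keep the count down to $\frac32 n$. Write the spine of $T_1$ as $s_1,\dots,s_m$, so $T_1$ has $m$ internal nodes and $n-m$ leaves, each leaf attached to some $s_j$. The first task is to extract the combinatorial content of the niceness condition. I would show that ``the line $vw$ separates $u$ from $x$ for every sub-path $uvwx$ with $v,w$ internal'' forces the spine to be \emph{monotone}: each spine chord $s_js_{j+1}$ splits the convex hull into two arcs, one carrying all neighbours of $s_1,\dots,s_j$ and the other all neighbours of $s_{j+1},\dots,s_m$. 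Consequently, after rotating the labelling, the leaves hanging on a given spine vertex occupy a controlled arc and the whole point set is ordered consistently with the spine, which is the feature that lets the lemma bite.

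With this in hand, the intended route is to transform $T_1$ cheaply into an intermediate tree $C$ fitted to Lemma~\ref{lem:main}, then finish by one application of the lemma against (a lightly preprocessed) $T_2$. Concretely, I would keep the spine and, using Claim~\ref{clm:simple}, re-route each leaf to the border edge on the side prescribed by niceness; each re-routing is a single flip, so $T_1\to C$ costs at most $n-m$. The monotone structure is what should guarantee that these new border edges line up into a genuine prefix $v_1\cdots v_i$, so that $C$ plays the role of ``$T_1$'' in Lemma~\ref{lem:main} while the suffix vertices are leaves attached to the prefix; the lemma then transforms $C$ into $T_2$ in $|C\Delta T_2|/2\le n-1$ flips. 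An alternative packaging of the same idea is to flip the leaves so that $T_1$ becomes a Hamiltonian path and invoke the path case (Theorem~\ref{thm:onepath}), but one must be careful that this does not simply add the two bounds.

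The place where I expect real difficulty is the tight accounting: the naive total $(n-m)+(n-1)$ only beats $2n$ when the spine is long, so the two phases must be balanced so that they are never both close to $n$. I would split on the spine length: when $m\ge n/2$ the leaf re-routing costs $n-m\le n/2$ and the lemma supplies the remaining $\le n-1$; when $m<n/2$ almost every vertex is a leaf, the internal set is small, and Lemma~\ref{lem:main} should apply far more directly because most vertices are already leaves attached to a small internal set, again saving $\Omega(n)$. The crux is therefore twofold: making the niceness-to-monotonicity translation fully rigorous (so that the re-routed border edges really form a prefix path and the arcs behave), and checking that in each regime the $\approx n/2$ saving is realised \emph{simultaneously} with the lemma's $\le n-1$ guarantee, so that the two contributions add to at most $\frac32 n$ rather than merely to $2n-\Omega(n)$.
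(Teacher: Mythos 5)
Your first step is correct and is indeed the heart of the matter: niceness together with the non-crossing property implies that for every spine edge $w_iw_{i+1}$, one open arc of the chord $w_iw_{i+1}$ consists exactly of the vertices of the component of $w_i$ in $T_1-w_iw_{i+1}$ (minus $w_i$), and the other arc exactly of the component of $w_{i+1}$ (minus $w_{i+1}$). However, the way you then invoke Lemma~\ref{lem:main} reverses the roles of the two trees, and this is a genuine gap. In the lemma, one tree contains the border prefix $v_1\cdots v_i$, and it is the \emph{other} tree that must have no edge with both endpoints in $\{v_{i+1},\dots,v_n\}$. You cast your re-routed caterpillar $C$ as the prefix tree, which forces the arbitrary tree $T_2$ to be the suffix-free one; that fails in general, and no ``light preprocessing'' can repair it within budget, since $T_2$ may have $\Omega(n)$ edges with both endpoints in the suffix and each flip deletes only one edge. (Also, $C$ still contains the spine chords, so it is not a border path plus pendant leaves, and your alternative of completing $T_1$ to a Hamiltonian path and quoting Theorem~\ref{thm:onepath} adds the two bounds, as you yourself note.) The fix is to notice that your monotonicity statement already finishes the job without touching $T_1$ at all: it implies that the cyclic order around the hull is $w_1,\ell(w_2),w_3,\ell(w_4),\dots$ along one arc between the spine endpoints $w_1,w_k$, and $\ell(w_1),w_2,\ell(w_3),w_4,\dots$ along the other, so every edge of the caterpillar (spine edge or leaf edge) has an endpoint in \emph{each} of the two closed arcs joining $w_1$ to $w_k$ --- in particular in the shorter one, which has at most $n/2+1$ vertices.

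Relabel so that this shorter arc is $v_1,\dots,v_j$ with $j\le n/2+1$. Then the \emph{untouched} caterpillar satisfies the suffix condition of Lemma~\ref{lem:main}, and it is $T_2$ that gets the cheap preprocessing: by Claim~\ref{clm:simple}, add the $j-1\le n/2$ border edges $v_1v_2,\dots,v_{j-1}v_j$ to $T_2$, then apply Lemma~\ref{lem:main} once (modified $T_2$ as the prefix tree, $T_1$ as the suffix-free tree) for at most $n$ further flips, giving $\frac32 n$ in total; this is exactly the paper's proof. This also shows that your case split on the spine length $m$ tracks the wrong parameter: what matters is the size of the shorter arc spanned by the two spine endpoints, which is at most $n/2+1$ regardless of $m$, so no case analysis is needed. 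Indeed, in your regime $m<n/2$ the proposal contains no actual argument, only the hope that the lemma ``should apply far more directly''; a star ($m=1$) is handled not because it has many leaves but because its two spine endpoints coincide, so the prefix degenerates to a single vertex and the lemma applies with no preprocessing at all.
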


\begin{proof}
Let us denote by $w_1,\ldots,w_k$ the set of internal nodes of $T_1$. By symmetry, assume that $w_1=v_1$ and denote by $j$ the index such that $v_j=w_k$. Up to reversing the ordering of the vertices, we may also assume that $j\leqslant n/2$. Applying Claim~\ref{clm:simple} at most $j$ times, we can add all edges $v_iv_{i+1}$ for $i<j$ to $T_2$. Note that the obtained tree satisfies the hypothesis of Lemma~\ref{lem:main}, hence we can transform it into $T_1$ in at most $n$ steps, which completes the proof.
\end{proof}

Let $P$ be a set of points in convex position.
We say that $t$ edges $a_1b_1,\ldots,a_tb_t$ are \emph{parallel} if $a_1,a_2,\ldots,a_t,b_t,b_{t-1},\ldots,b_1$ appear in that order in the cyclic ordering corresponding to the convex hull of $P$. If moreover all their endpoints are pairwise distinct, then we say the edges are \emph{strictly parallel}. Using Lemma~\ref{lem:main}, one can also easily prove the following that will lead to another interesting corollary:

\begin{lemma}\label{lem:paraedge}
Let $T_1,T_2$ be two non-crossing spanning trees such that $T_1$ has $t$ parallel edges (resp. strictly parallel edges). There exists a flip sequence between $T_1$ and $T_2$ of length at most $2 n-\frac{t-1}{2}$ (resp. $2n-t+1$).
\end{lemma}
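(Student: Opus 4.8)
The plan is to reduce the statement to an application of Lemma~\ref{lem:main} by exploiting the $t$ parallel edges of $T_1$ to economize on the number of border edges we must insert. Recall that the generic strategy behind Corollary~\ref{coro1} is: add all border edges to both $T_1$ and $T_2$ (roughly $n$ flips each) and then observe that any two trees consisting only of border edges are a single flip apart. The parallel edges give us a way to avoid paying the full cost on the $T_1$ side. Suppose the parallel edges are $a_1b_1,\ldots,a_tb_t$ with $a_1,\ldots,a_t,b_t,\ldots,b_1$ in cyclic order. The key geometric observation is that these $t$ nested chords, being non-crossing and parallel, already ``cut off'' a collection of arcs on the convex hull, and the border edges lying strictly inside these arcs are the ones we can hope to install cheaply.

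The main idea is to apply Lemma~\ref{lem:main} \emph{after} first rotating our labelling so that one endpoint of the parallel family sits at $v_1$. Concretely, I would relabel so that $a_1=v_1$, making $b_1=v_n$ (the outermost chord spans the whole remaining arc). We then add border edges to $T_2$ using Claim~\ref{clm:simple} to satisfy the hypothesis of Lemma~\ref{lem:main} with some index $i$, and transform $T_2$ into $T_1$ in at most $|T_1 \Delta T_2|/2 \le n$ flips. The savings come from the first phase: rather than adding \emph{all} border edges to $T_2$, the parallel edges of $T_1$ let us choose $i$ so that the prefix $v_1 \ldots v_i$ of border edges we must install is large, i.e. the parallel edges force many of the ``heavy'' edges of $T_1$ to lie on one side, so fewer border edges are needed to trigger the lemma. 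For the \emph{strictly} parallel case, the $t$ edges use $2t$ distinct vertices, which lets us skip inserting roughly $t-1$ border edges (one per gap between consecutive parallel chords), giving the stronger bound $2n-t+1$; for the merely parallel case, shared endpoints mean we only save about half as much, yielding $2n-\frac{t-1}{2}$.

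I would carry out the steps in this order. First, fix the parallel edges and argue that, because they are non-crossing and nested, the chord $a_\ell b_\ell$ together with the arc between $a_\ell$ and $a_{\ell+1}$ (and symmetrically $b_{\ell+1}, b_\ell$) bounds a region whose border edges can be added to $T_1$ essentially for free via Claim~\ref{clm:simple}, since each such region contains at least one non-border edge of $T_1$ to remove. Second, count: in the strictly parallel case there are $t-1$ such gaps each contributing a saved flip, and one checks that the $t$ chords leave at most $n-(t-1)$ border edges to be honestly flipped in, so we reach a common border-only tree in $2n-(t-1)$ flips total, then spend one more to connect the two border trees, after a careful accounting giving $2n-t+1$. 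Third, in the non-strict case, repeated endpoints collapse adjacent gaps, halving the count to $\frac{t-1}{2}$.

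The hard part will be making the counting argument rigorous while handling the degenerate configurations of parallel edges, in particular shared endpoints in the non-strict case, where a single vertex may be the endpoint of several parallel chords; here one must be careful that the ``gaps'' between consecutive chords can be empty, so not every chord yields a genuinely new saved border edge, which is precisely why the guaranteed saving drops from $t-1$ to $\frac{t-1}{2}$. A secondary subtlety is ensuring that after the cheap insertions the resulting tree still meets the structural hypotheses of Lemma~\ref{lem:main} (namely that one tree contains a full prefix of border edges and the other avoids edges on the far side of $v_i$); I expect this to follow by choosing $i$ at the appropriate endpoint of the outermost parallel chord, but verifying it requires tracking which edges survive the flips and checking they do not straddle the index $i$.
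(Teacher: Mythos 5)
There is a genuine gap here, and in fact two. First, both variants of your plan (the ``choose a good prefix $i$ and apply Lemma~\ref{lem:main}'' variant and the ``meet at a common border-only tree'' variant) ignore the hypothesis of Lemma~\ref{lem:main} on the $T_1$ side: the lemma requires that \emph{every} edge of $T_1$ touch the prefix arc $\{v_1,\ldots,v_i\}$, and having $t$ parallel edges does not force this. The tree $T_1$ may have edges with both endpoints strictly inside the gap between two consecutive chords $a_\ell b_\ell$ and $a_{\ell+1}b_{\ell+1}$, on \emph{either} side of the parallel family, so no choice of $i$ ``at the appropriate endpoint of the outermost parallel chord'' makes the hypothesis hold. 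You flag this as a ``secondary subtlety''; it is actually the heart of the proof. The paper's argument spends essentially all of its technical effort on exactly this point: it takes a \emph{maximal} parallel family $t'\ge t$, observes that each path of $T_1$ connecting consecutive chords lies entirely on one side (top or bottom), assumes by symmetry that at least $w=(t'-1)/2$ of them are top paths, and then flips away every edge of $T_1$ with both endpoints in a bottom gap (via the ``exterior edge'' argument) at a cost of $b-w$, where $b$ is the number of bottom vertices. Only after this cleanup does Lemma~\ref{lem:main} apply, with $T_2$ having received the $n-b$ border edges on the other side. Note also that this is where the $\frac{t-1}{2}$ really comes from --- a majority argument over which side the connecting paths lie on --- not from ``shared endpoints collapsing gaps'' as you suggest; and the improvement in the strictly parallel case comes from the non-trivial top paths containing border edges shared by both trees, which lowers the cost $|T_1'\Delta T_2'|/2$ of the final application of Lemma~\ref{lem:main}, not from saving insertions.

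Second, your counting is not just unproven but unachievable by the route you describe. Flips performed via Claim~\ref{clm:simple} are never ``free'': each one costs $1$, and transforming a tree into a border-only tree costs exactly its number of non-border edges, a quantity unrelated to $t$. Concretely, take $n=2t$ points and let $T_1$ consist of the nested chords $\{v_i,v_{n+1-i}\}$ for $i=1,\ldots,t$ together with the connectors $\{v_i,v_{n-i}\}$ for $i=1,\ldots,t-1$. This $T_1$ has $t$ strictly parallel edges but only one border edge, so reaching a common border-only tree from $T_1$ and a generic $T_2$ costs at least $(n-2)+(n-1)$ flips, i.e.\ $2n-\Theta(1)$, while your claimed count for this strategy is $2n-t+2=\frac32 n+2$. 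So the step ``each gap contributes a saved flip, giving a common border-only tree in $2n-(t-1)$ flips'' is false, and no local repair of the accounting can fix it: the strategy of meeting at a border tree simply cannot exploit parallel edges, which is why the paper's proof meets instead at a tree all of whose edges touch one side of the parallel family.
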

\begin{proof}
Let $t'$ be the maximum number of parallel edges in $T_1$. Note that $t' \ge t$.
Let $e_1=a_1b_1,\ldots,e_{t'}=a_{t'}b_{t'}$ be $t'$ parallel edges of $T_1$. We say that vertices between $b_{t'}$ and $b_1$ (resp. $a_1$ and $a_{t'}$) in the cyclic ordering in the section that does not contain $a_1$ (resp. $b_1$) are \emph{top vertices} (resp. \emph{bottom vertices}). Let $b$ be the number of bottom vertices.

Since $T_1$ is non-crossing, for every $i \le t'-1$, there is a path $Q_i$ from an endpoint of $e_i$ to an endpoint of $e_{i+1}$ in $T_1$. Note that this path might be reduced to a single vertex if the two edges share an endpoint (we say that $Q_i$ is \emph{trivial}). Observe that the same trivial path may appear several times if several edges share the same endpoint. By maximality of $t'$, there cannot be an edge in $Q_i$ between a top vertex and a bottom vertex. Therefore we can classify the $t'-1$ paths $Q_i$ in two types: $Q_i$ is a \emph{top path} if it only contains edges between top vertices and a \emph{bottom path} otherwise. By symmetry, we can assume that at least $w:=(t'-1)/2$ paths are top paths. 
    
We claim that we can transform $T_1$ into a tree $T_1'$ such that all the edges of the tree $T'_1$ have at least one endpoint between $a_1$ and $a_{t'}$ in at most $b-w$ steps. 

Recall that by maximality of $t'$ there is no edge between the top part $\{a_i,\ldots,a_{i+1}\}$ and the bottom part $B_i=\{b_{i+1},\ldots,b_i \}$ except $a_ib_i$ and $a_{i+1}b_{i+1}$.
If $Q_i$ is a bottom path, we can remove one edge of the bottom part and add one edge in the top part to get a top path. 
Now, we say that an edge $bb'$ with $b < b'$ with both endpoints in $B_i$ is \emph{exterior} if no edge $cc'$ distinct from $bb'$ with both endpoints in $B_i$ satisfies $c \le b < b' \le c'$.
One can easily remark that we can iteratively replace an exterior edge $bb'$ by an arc connecting $b$ or $b'$ to $a_i$ until no edge with both endpoints in $B_i$ remains. So we can ensure that no edge with both endpoints in $B_i$ remains in at most $|B_i|-1$ steps ($-2$ if $Q_i$ was initially a top path). If we sum over all the sections, since $\sum_i (|B_i|-1)= b$ and we remove $1$ additional flip for each of the $w$ top paths, this process yields a tree $T'_1$ where all the edges have at least one endpoint between $a_1$ and $a_{t'}$ in $b-w$ flips.

Now we can transform $T_2$ into a tree $T_2'$ that contains all border edges except maybe between bottom vertices in at most $n-b$ steps by Claim~\ref{clm:simple}. Finally, we may apply Lemma~\ref{lem:main} to transform $T_1'$ into $T_2'$ in at most $n$ steps, which in total gives a flip sequence of length at most $(b-w)+(n-b)+n = 2n-w$, as claimed.

In the strictly parallel case, observe that each non-trivial top path must contain a border edge in $T_1$ (and then in $T'_1$). Note that there are at most $t'-t$ trivial top paths, hence $T'_1$ and $T'_2$ share at least $w-t'+t$ border edges, and by Lemma~\ref{lem:main}, the flip sequence from $T_2'$ to $T_1'$ costs at most $n-w+t'-t$. The total length of the flip sequence between $T_1$ and $T_2$ is thus at most $2n-2w+t'-t=2n-t+1$.
\end{proof}

Lemma~\ref{lem:paraedge} immediately implies:

\begin{corollary}\label{coro:longpath}
Let $T_1,T_2$ be two non-crossing spanning trees such that $T_1$ contains a subpath of length $t$. There exists a flip sequence between $T_1$ and $T_2$ of length at most $2 n-\frac t3$.
\end{corollary}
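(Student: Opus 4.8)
The plan is to reduce Corollary~\ref{coro:longpath} to Lemma~\ref{lem:paraedge} by extracting many \emph{strictly parallel} edges from the guaranteed path. Concretely, if the $t$-edge path inside $T_1$ yields a strictly parallel family of size $q\ge \frac t3+1$, then the strict case of Lemma~\ref{lem:paraedge} produces a flip sequence of length at most $2n-q+1\le 2n-\frac t3$, which is exactly the claimed bound. (The non-strict case gives the same target if one can instead extract about $\frac{2t}{3}$ parallel edges, since then $2n-\frac{q-1}{2}\le 2n-\frac t3$; I would keep this as a fallback.) Thus the entire content is the purely combinatorial extraction statement: among the edges $u_0u_1,\dots,u_{t-1}u_t$ of a non-crossing path, one can select at least $\frac t3$ that are pairwise nested and pairwise endpoint-disjoint.

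First I would record the geometric dictionary for \emph{parallel}. Two non-crossing chords are parallel precisely when one of the two boundary arcs cut off by the first contains the second, so a parallel family is nothing but a \emph{chain} for the nesting order on the chords, and \emph{strictly} parallel additionally forbids shared endpoints. With this reformulation, extracting parallel edges amounts to finding a long chain in the nesting poset of the $t$ path-edges, and passing from parallel to strictly parallel amounts to discarding edges that share an endpoint — which, along a path, can only happen for consecutive edges, costing at most a factor two.

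The key step is to guarantee a long chain. Here I would orient each edge along the path and classify it according to the cyclic position of its two endpoints — roughly, edges that ``turn one way'', edges that ``turn the other way'', and short boundary-type edges — so as to partition the $t$ edges into a constant number of nesting chains. The largest class then gives a parallel family of size at least $\frac t3$; thinning it to remove the few consecutive pairs sharing a vertex leaves a strictly parallel subfamily of the required size, and Lemma~\ref{lem:paraedge} finishes the argument.

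The hard part will be making this decomposition precise and uniform across all paths. The genuine obstruction is a path that hugs the convex hull, for instance one using only the border edges $v_1v_2,\dots,v_tv_{t+1}$: such a path has no two nested edges at all, so it carries no long chain and the nesting argument alone collapses. For exactly these paths, however, the relevant resource is the abundance of \emph{border} edges: by Claim~\ref{clm:simple} one can cheaply install all of them in $T_2$ and then apply Lemma~\ref{lem:main}, recovering the same $2n-\Omega(t)$ saving through shared border edges rather than through parallel edges. Reconciling the two regimes — \emph{many parallel edges} versus \emph{many border edges} — inside a single count is precisely where the constant $\frac13$ originates, and balancing this trade-off correctly (together with the bookkeeping that separates the strict from the non-strict case of Lemma~\ref{lem:paraedge}) is the main difficulty I anticipate.
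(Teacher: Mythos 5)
Your high-level strategy is the right one --- the paper's proof also rests on a dichotomy between \emph{many parallel edges} and \emph{many border edges}, and the constant $\frac13$ does indeed come from balancing these two regimes --- but your proposal defers exactly the two steps that constitute the actual proof, and the one concrete mechanism you do propose fails. The ``key step'' of partitioning the $t$ path edges into a constant number of nesting chains is false in general: as you yourself note, a path hugging the convex hull has no two nested edges, so no bounded number of chains covers it. Once you retreat to ``reconciling the two regimes'', you have specified neither (i) a classification deciding which path edges feed the parallel argument and which feed the border argument, nor (ii) the count showing that the two resulting flip sequences cannot both be longer than $2n-\frac t3$; you explicitly flag both as the main anticipated difficulty. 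Since these two points \emph{are} the content of the corollary, the proposal is a plan rather than a proof. A secondary issue: the strictly parallel route buys nothing here, because consecutive path edges share an endpoint, so a parallel family of $m$ path edges thins to only about $m/2$ strictly parallel ones, and the resulting bound $2n-\frac m2+1$ is no better than the non-strict bound $2n-\frac{m-1}2$ applied to the original family.

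For comparison, the paper resolves both points with a purely local classification. Writing the path as $Q=x_1,\ldots,x_{t+1}$, an edge $x_ix_{i+1}$ is \emph{separating} if $x_{i-1}$ and $x_{i+2}$ lie on opposite sides of the line $x_ix_{i+1}$, and \emph{series} otherwise (the two extreme edges count as both), so that $s+p=t+2$. Two observations then replace your chain decomposition: first, the separating edges automatically form a \emph{single} parallel family --- no decomposition into several chains and no ``largest class'' argument is needed --- so Lemma~\ref{lem:paraedge} gives a flip sequence of length $a=2n-\frac{p-1}2$; second, each series edge $x_ix_{i+1}$ forces a border edge of $T_1$ in the arc between $x_i$ and $x_{i+1}$ avoiding $x_{i-1},x_{i+2}$, so routing both trees through the border tree gives a sequence of length $b=2n-s+1$. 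The balancing you were missing is then a one-line weighted average: $2a+b=6n-p-s+2=6n-t$, hence $\min(a,b)\le 2n-\frac t3$. Note that the balance is achieved by choosing the better of two candidate flip sequences, not by a single extraction of $\frac t3$ parallel edges, which is why the hull-hugging obstruction never needs a separate case.
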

\begin{proof}
Let $Q:=x_1,\ldots,x_{t+1}$ be a subpath of $T_1$ of length $t$.
For every $2 \le i \le t-1$, we say that the edge $x_ix_{i+1}$ of $Q$ is \emph{separating} if $x_{i-1}$ and $x_{i+2}$ are separated by $x_i,x_{i+1}$ (\emph{i.e.} exactly one of $x_i,x_{i+1}$ appear between $x_{i-1}$ and $x_{i+2}$ in the cyclic ordering of the vertices). We say that the edge is a \emph{series edge} otherwise. By convention, the first and last edges of $Q$ are both series and separating. Denote by $s$ (resp. $p$) the number of series edges (resp. separating edges), so that $s+p=t+2$.

Observe that the set of separating edges of $Q$ are parallel, hence Lemma~\ref{lem:paraedge} ensures that there exists a flip sequence of length at most $a= 2n-\frac{p-1}2$. Moreover, if $x_ix_{i+1}$ is a series edges then there is a border edge in $T_1$ between $x_i$ and $x_{i+1}$ (in the part that does not contain the vertices $x_{i-1}$ and $x_{i+2}$). So there exists also a flip sequence of length at most $b=2n-s+1$ from $T_1$ to $T_2$ (passing through a border tree). Now observe that $2a+b = 6n-t$, hence either $a$ or $b$ must be at most $\frac{6n-t}{3}$, which concludes.
\end{proof}

In the case of paths, we can actually improve  Corollary~\ref{coro:longpath} by finding a flip sequence of length  at most $\frac 32 n$ (reproving a result of~\cite{Aichholzer22+} in a shorter way):

\begin{theorem}\label{thm:onepath}
Let $T_1,T_2$ be two non-crossing spanning trees such that $T_1$ is a path. There exists a flip sequence between $T_1$ and $T_2$ of length at most $\frac 32 n$. 
\end{theorem}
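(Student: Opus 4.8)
The plan is to exploit the structure of a Hamiltonian path $T_1 = x_1 x_2 \cdots x_n$ more aggressively than the generic bound of Corollary~\ref{coro:longpath}, which only gives $2n - t/3$. The key observation is that for a path we can split the edges into \emph{series} and \emph{separating} edges (as in the proof of Corollary~\ref{coro:longpath}) with $s + p = n+1$, and then apply \emph{both} strategies but share the savings more cleverly than a simple averaging argument. Since $T_1$ has $n-1$ edges, we would set up the two competing flip sequences provided by Lemma~\ref{lem:paraedge}: one using the $p$ separating edges (which are parallel), costing at most $2n - \frac{p-1}{2}$, and one using the $s$ series edges (each guaranteeing a border edge between consecutive path vertices), costing at most $2n - s + 1$ by routing through a border tree.

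First I would recall that the separating edges of a path are automatically \emph{strictly parallel}, not merely parallel, because a path visits each vertex exactly once, so all endpoints are distinct. This upgrade is the crucial improvement over Corollary~\ref{coro:longpath}: the strictly parallel branch of Lemma~\ref{lem:paraedge} gives the much stronger bound $2n - p + 1$ rather than $2n - \frac{p-1}{2}$. So I would take $a = 2n - p + 1$ and $b = 2n - s + 1$. Then $a + b = 4n - (p+s) + 2 = 4n - (n+1) + 2 = 3n + 1$, so the minimum of $a$ and $b$ is at most $\frac{3n+1}{2}$, which is essentially $\frac{3}{2}n$ as desired.

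The step I expect to be the main obstacle — and the one I would verify most carefully — is that the series/separating dichotomy for a \emph{path} genuinely yields strictly parallel separating edges whose count $p$ satisfies $s + p = n+1$ (the path has $n-1$ edges, and the convention that the two end edges are counted as both series and separating contributes the $+2$ discrepancy). I would also need to double-check the edge cases hidden in Lemma~\ref{lem:paraedge}, namely that the bound $2n - t + 1$ from the strictly parallel case applies verbatim when the parallel family comes from the separating edges of a path, paying attention to whether trivial paths $Q_i$ can arise; since consecutive separating edges of a Hamiltonian path need not share endpoints, the relevant top/bottom paths are typically non-trivial and each carries a border edge, so the accounting in Lemma~\ref{lem:paraedge} goes through. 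Finally I would tidy the constant, observing that $\frac{3n+1}{2} \le \frac{3}{2}n$ holds for the relevant range (or absorb the additive constant into the statement as written), completing the argument.
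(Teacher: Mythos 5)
Your argument has a fatal gap: the claim that the separating edges of a Hamiltonian path are \emph{strictly} parallel is false. Strict parallelism requires all endpoints to be pairwise distinct, but two separating edges that are consecutive along the path share an endpoint (the fact that a path visits each \emph{vertex} once does not help: every internal vertex is an endpoint of two path edges). A concrete counterexample is the zigzag path $v_1, v_n, v_2, v_{n-1}, v_3, \ldots$: every one of its $n-1$ edges is separating, and every consecutive pair shares an endpoint, so the largest strictly parallel subfamily has only about $(n-1)/2$ edges, not $p \approx n+1$. Without strictness you only get the weaker bound $2n-\frac{p-1}{2}$ from Lemma~\ref{lem:paraedge}, which is exactly Corollary~\ref{coro:longpath} and yields only $2n-\Omega(n)$ with constant $\frac13$. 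The natural repair---thinning the separating edges to every other one to restore distinctness of endpoints---gives $a' = 2n - \lceil p/2\rceil + 1$, and balancing $a'$ against $b = 2n - s + 1$ under $s+p = n+1$ only yields roughly $\frac{5}{3}n$ in the worst case (attained near $s \approx p/2$), so the approach cannot be patched to reach $\frac32 n$ by this route. (There is also a minor slip at the end: $\frac{3n+1}{2} > \frac32 n$, so even your intended averaging overshoots the stated bound.)

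For comparison, the paper's proof does not go through Lemma~\ref{lem:paraedge} at all. It splits the point set into a top and a bottom part using the two endpoints $x_1, x_n$ of the path, observes that every edge of $T_1$ is then either a border edge inside one part or a traversing edge between the parts, and builds two symmetric flip sequences: in each, the border edges of one part are added to $T_2$, the border edges of the other part of $T_1$ are converted to traversing edges, and Lemma~\ref{lem:main} finishes, with a saving equal to the number of shared border edges. Summing the two costs gives $3n$, so one of them is at most $\frac32 n$. The averaging idea in your proposal is in the same spirit, but the quantities being averaged must come from this endpoint-based decomposition, where the savings genuinely add up to $n$, rather than from the series/separating dichotomy, where they do not.
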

\begin{proof}
Let $x_1,\ldots,x_n$ be the vertices of the path $T_1$ (in order). Deleting $x_1$ and $x_n$ in the cyclic ordering yields two sets of vertices, one called the top part and the other the bottom part. We consider that $x_1$ and $x_n$ appear in both parts. Observe that all the edges of $T_1$ are either border edges (between two consecutive vertices of the top or the bottom part) or \emph{traversing edges} with one endpoint in each part.

Let us denote by $n_t$ (resp. $n_b$) the number of vertices of the top part (resp. bottom part) including $x_1$ and $x_n$. Note that $n_t+n_b=n+2$.
Let us denote by $b_t,b_b$ the number of border edges in $T_1$ respectively in the top and bottom parts.

Now we add all the $n_t+1$ border edges of the top part to $T_2$ and transform in $T_1$ all the $b_b$ border edges of the bottom part into traversing edges (similarly to Claim~\ref{clm:simple}). Observe that the two resulting trees share $b_t$ common border edges in the top part, and satisfy the hypothesis of Lemma~\ref{lem:main}. Therefore there is a flip sequence of length at most $n-b_t$ between them, and thus we can transform $T_1$ in $T_2$ with at most $(n_t-1)+b_b+(n-b_t)$ flips.

Exchanging the top and bottom parts in the previous argument yields another flip sequence of length $(n_b-1)+b_t+(n-b_b)$. The sum of these lengths is at most $2n+n_b+n_t-2 = 3 n$ which ensures one of the two sequences has length at most $\frac 32 n$, which completes the proof.
\end{proof}

\end{document}